\numberwithin{equation}{section}
\newcommand{\bea}{\begin{eqnarray}}
\newcommand{\eea}{\end{eqnarray}}
\newcommand{\ba}{\begin{array}}
\newcommand{\ea}{\end{array}}
\newcommand{\edc}{\end{document}}
\newcommand{\bc}{\begin{center}}
\newcommand{\ec}{\end{center}}
\newcommand{\be}{\begin{equation}}
\newcommand{\ee}{\end{equation}}
\newtheorem{thm}{Theorem}[section]
\newtheorem{defin}[thm]{Definition}
\theoremstyle{remark}
\newtheorem{rem}{Remark}[section]
\begin{document}
\title[Gibbs measures of the mixed Ising model]{Gibbs measures of the Ising model with mixed
spin-1 and spin-1/2 on a Cayley tree}
\author{Hasan Ak\i n$^{1,\dag}$}

\author{Farrukh Mukhamedov$^{2,3,\ddag}$}

\thanks{$^1$ \quad  The Abdus Salam International Centre for Theoretical Physics
(ICTP), Strada Costiera, 11, I - 34151 Trieste Italy\\
$^\dag$ \quad e-mail: {\tt
hakin@ictp.it; akinhasan25@gmail.com}\\
$^2$ \quad  Department of Mathematical Sciences, College of Science, The
United Arab Emirates University, P.O. Box, 15551, Al Ain Abu
Dhabi, UAE\\
$^3$ \quad  Department of Algebra and Analysis, Institute of Mathematics named after V.I.Romanovski,
4, University str., 100125, Tashkent, Uzbekistan\\
$^\ddag$ \quad e-mail: {\tt farrukh.m@uaeu.ac.ae; far75m@gmail.com
}}
\begin{abstract}
In the present paper, the Ising
model with mixed spin-(1,1/2) is considered on the second order Cayley tree. A construction of splitting Gibbs measures
corresponding the model is given which allows to establish the
existence of the phase transition (non-uniqueness of Gibbs
measures). We point out that, in the phase transition region, the
considered model has three translation-invariant Gibbs measures in
the ferromagnetic and anti-ferromagnetic regimes, while the
classical Ising model does not possesses such Gibbs
measures in the anti-ferromagnetic regime. It turns
out that the considered model, like the Ising model, exhibits a
disordered Gibbs measure. Therefore,
non-extremity and extremity of such disordered Gibbs measures is investigated by
means of tree-indexed Markov chains.
\\
\textbf{Keywords}: the mixed spin-(1,1/2) Ising model, Gibbs measures, phase transition, disorder phase.\\
\end{abstract}
\maketitle

\section{Introduction}

In the last decades, the Ising model has been one of the most
intensively studied model which used to describe critical
behaviours of certain systems in natural sciences. Many
interesting results have been observed in the phase transition
theory by means of exactly solvable Ising models \cite{Bax}.
Recently, several extension have been carried out in the a mixed
spin Ising model to describe a wide range of systems\cite{EAP16,EBM18,EKKK16}. Because it
is the lowest mixed-spin system, the mixed spin-1 and spin-1/2
system is an excellent candidate for studying mixed-spins. Higher
spins have higher critical temperatures, as is generally known,
i.e. spin-1 is a higher spin than spin-1/2. In addition, unlike
the spin-1/2 system, the spin-1 system exhibits first-order phase
transitions for acceptable parameters \cite{K88,Moraal}. In
\cite{Silva-Salinas1991}, Silva and Salinas performed some exact
calculations for a Curie-Weiss or mean-field version of the
mixed-spin Ising Hamiltonian on the Bethe lattice. In
\cite{Albayrak2020}, it has been studied the entropy and
isothermal entropy change of the mixed spin-(1,1/2) Ising model on
the Bethe lattice by means of the exact recursion relations for
the coordination numbers $q = 3,4$ and 6. In the present situation, there
are plenty of works devoted such types of investigations. All
existing works related to mixed spin Ising models deal with
physical approach from the physics perspective. Mostly, it has been
employed the exact iterative relations technique/the
renormalization group theory,  Monte Carlo simulations,
Green-function technique, the mean-field theory and a path
probability method (see \cite{GK20}, for related references).

On the other hand, it is essential to investigated the mentioned
models from a rigorous mathematical perspective. In this regard,
statistical mechanical approach is adequate, since it predicts the
relation between the observable macroscopic properties of the
system given only the knowledge of the microscopic interactions
between components. This can be explained by mathematical
framework. In this scheme, Gibbs measures are one of the central
objects of equilibrium statistical mechanics \cite{G}.
Therefore, one of the main problems of statistical physics is to
describe all Gibbs measures corresponding to the given
Hamiltonian. To the best knowledge of the authors, the mixed spin
Ising models on the Cayley tree (the same as Bathe lattice
\cite{Ost}) are  not well studied from the measure-theoretical point
of view. Therefore, one of the main aims of the present paper is
to develop a measure-theoretic approach (i.e. Gibbs measure
formalism) and rigorously establish the phase transition for the
Ising model with mixed spin on the Cayley trees.

In the rigorous approach, the Ising model on the Cayley tree
always has a disordered phase which corresponds to a Gibbs measure
having no external effect. In \cite{Bleher1990,Ioffe-1996}, it was
established that the disordered Gibbs measure, in the
ferromagnetic Ising model on the Cayley tree, is extreme, for
$T\geq T^{SG}_c$, where $T^{SG}_c$ is the critical temperature of
the spin glass model, and it is not extreme for $T< T^{SG}_c$. We
point out that the extremality of the disordered phase of lattice
models is important in the theory of information flows
\cite{Mar03,Mos01,MP03}. Therefore, there are several works
devoted to the extremality problem of Gibbs measures for variety
of models on trees (see
\cite{Khakimov-2016,KR17,Mukhamedov2022,Rahmatullaev2021,Roz,RKK18}).

In present paper, we consider the  Ising
model with mixed spin-(1,1/2) on the second order Cayley tree. We construct Gibbs measures
corresponding the model, which allows  us to establish the
existence of the phase transition (non-uniqueness of Gibbs
measures).

The main result of this paper is the following one (see Fig \ref{fig5}):

\begin{thm}\label{main} Assume that $J\neq 0$ and $\theta=\exp\{J\beta/2\}$. Then for the Ising model with mixed spins $(1,1/2)$ on the Cayley tree of order two, the following statements hold true:
\begin{enumerate}
\item[(i)]  If $\theta\in (0,\theta_{1})\cup(\theta_{2},\infty)$, where
$\theta_{1,2}$ are given by \eqref{TT}, then there occurs a phase transition;

 \item[(ii)] There exists $\tilde\theta_1$ and $\tilde\theta_2$ (see \eqref{tta}) such that the disordered phase $\mu_0$ is extreme if and only if
    $$
    \theta\in (\tilde\theta_1, \tilde\theta_2);
    $$
 \end{enumerate}
   \end{thm}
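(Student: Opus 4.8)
The plan is to carry out the entire analysis inside the splitting Gibbs measure formalism on the order-two Cayley tree, so that both assertions reduce to statements about a finite-dimensional recursion. First I would attach to every vertex $x$ the normalized partition function of the subtree rooted at $x$, conditioned on the spin at $x$. Because the model is mixed, these vectors live alternately in the spin-$1$ state space $\{-1,0,1\}$ and the spin-$1/2$ state space $\{-1/2,1/2\}$ as one descends the tree, so the Kolmogorov consistency conditions produce a coupled pair of recursions between consecutive generations. Eliminating the spin-$1/2$ layer yields an effective one-step map $F_\theta$ acting on the two boundary parameters that describe a spin-$1$ vertex, and translation-invariant Gibbs measures are exactly the fixed points of $F_\theta$. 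The involution $\sigma\mapsto-\sigma$ singles out the symmetric fixed point defining the disordered measure $\mu_0$.

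For part (i) I would impose the symmetric reduction of $F_\theta$ and collapse the fixed-point system to a single scalar equation $g_\theta(u)=0$. The disordered measure is always a root $u_0$, and a phase transition occurs precisely when $g_\theta$ acquires further roots. In the ferromagnetic regime I expect a symmetry-related pair to bifurcate off $u_0$, while in the antiferromagnetic regime the extra pair should instead be created in a fold away from $u_0$; in either case this produces the three translation-invariant measures. The thresholds $\theta_{1,2}$ of \eqref{TT} then arise as the parameter values where this pair merges with the disordered branch or coalesces with itself, i.e. where the bifurcation condition $g_\theta'(u_0)=0$ holds or the discriminant of $g_\theta$ changes sign; a direct root count over $\theta\in(0,\theta_1)\cup(\theta_2,\infty)$ yields the non-uniqueness.

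For part (ii) I would realize $\mu_0$ as a tree-indexed Markov chain and compute the transition operator $\mathbb{P}$ determined by the symmetric fixed point; on the bipartite tree it is convenient to pass to the two-step operator on the spin-$1$ sublattice, whose effective branching number is $k^2$ with $k=2$. The extremality of $\mu_0$ is then governed by the second eigenvalue $\lambda$ of this operator via the Kesten-Stigum quantity: the Kesten-Stigum theorem gives non-extremity as soon as this quantity exceeds $1$, which already accounts for the complement of the claimed interval. For the converse I would use the reflection symmetry of the disordered channel, which turns the associated reconstruction problem into a symmetric one for which the Kesten-Stigum bound is tight; concretely, I would prove non-reconstruction below the bound by a contraction estimate for the recursive entropy functional, following the standard tree-indexed Markov chain technique. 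Solving the equality in the Kesten-Stigum condition for $\theta$ then produces $\tilde\theta_{1,2}$ as in \eqref{tta}.

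The recursion algebra and the eigenvalue computation are routine; the genuine obstacle is the extremality direction of (ii). Promoting the Kesten-Stigum sufficient condition for non-extremity to a sharp threshold requires showing that the same bound is also sufficient for extremity, i.e. that the reconstruction threshold of the disordered channel coincides with the census threshold. Such tightness is not automatic for a generic channel and must be forced by the symmetry of $\mu_0$, so I expect the careful symmetric-channel estimate to be where most of the work concentrates.
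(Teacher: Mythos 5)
For part (i) your plan coincides with the paper's: the compatibility conditions give the coupled recursion, eliminating the spin-$1/2$ layer yields the translation-invariant fixed-point system, and after factoring out the disordered root the remaining quartic is palindromic, so the substitution $t=Z+1/Z$ reduces everything to a quadratic whose root location relative to $t=2$ produces exactly the thresholds $\theta_{1,2}$ of \eqref{TT}. Nothing essential is missing there.

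Part (ii) is where your proposal diverges and where it has a genuine gap. The non-extremality half (Kesten--Stigum) matches the paper, modulo one discrepancy you should resolve: you propose to work with the two-step chain on the spin-$1$ sublattice with effective branching number $k^2=4$, which would make the Kesten--Stigum condition $4\lambda_{\max}^2>1$; the paper applies the condition with $k=2$ to the composite matrix \eqref{matHS2}, i.e. $2\lambda_{\max}^2>1$. These give different critical values of $\theta$, so with your normalization you would not recover \eqref{tta}. The serious problem is the extremality half. You plan to prove that the Kesten--Stigum bound is \emph{tight} for this channel by an entropy-contraction argument, asserting that the reflection symmetry of the disordered channel forces tightness. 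That assertion is not justified and is false for symmetric channels in general: the composite channel here is a genuine three-state channel, and for multi-state symmetric channels (e.g. the $q$-state Potts channel with $q\geq 5$, see \cite{Sly}) the reconstruction threshold lies strictly below the Kesten--Stigum threshold, so symmetry alone buys you nothing. You correctly identify this as the main obstacle, but you leave it as an expectation rather than an argument, and there is no reason given why the contraction estimate should close at exactly the Kesten--Stigum value. The paper avoids this entirely: it invokes the Martinelli--Sinclair--Weitz sufficient condition $k\kappa\gamma<1$ of \eqref{exramality1}, computes $\kappa=\gamma$ explicitly from \eqref{matHS2}, and observes that for this particular matrix $\kappa\gamma$ coincides with $\lambda_{\max}^2$, so the MSW extremality region and the Kesten--Stigum non-extremality region happen to be exact complements. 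The ``if and only if'' is thus an arithmetic coincidence of two one-sided criteria rather than a tightness theorem; your route would require establishing a sharp reconstruction threshold, which is a much harder statement that your sketch does not supply.
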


\begin{figure} [!htbp]\label{fig5}
\centering
\includegraphics[width=70mm]{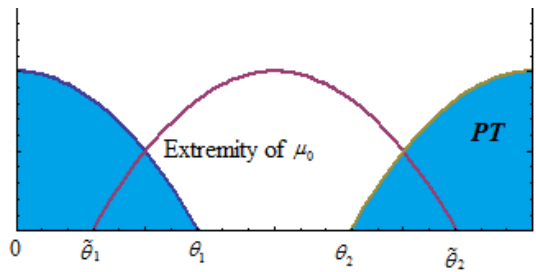}
\caption{Phase diagram}
\end{figure}

Interestingly, in the phase transition region of the considered model has a totally different phase diagram comparing to the Ising model. Indeed, the considered model possess three splitting translation-invariant Gibbs measures in both the ferromagnetic and
anti-ferromagnetic regimes, whereas the classical Ising
model does not have translation-invariant Gibbs measures in the
anti-ferromagnetic regime (i.e. $\theta <1$) \cite{Bax,G}.
Moreover, it turns out that the considered model, like the Ising model, exhibits a
disordered Gibbs measure.

The proof of this theorem will be given in forthcoming sections.

\section{Preliminaries and Splitting Gibbs measures}

Let $\Gamma^k = (V,L)$ be a semi-infinite Cayley tree of order
$k\geq 1$ with the root $x^0$ (whose each vertex has exactly $k+1$
edges, except for the root $x^0$, which has $k$ edges). Here $V$ is
the set of vertices and $L$ is the set of edges. The vertices $x$
and $y$ are called {\it nearest neighbors} and they are denoted by
$l=\langle{x,y}\rangle$ if there exists an edge connecting them. A collection of
the pairs $\langle{x,x_1}\rangle,\dots,\langle{x_{d-1},y}\rangle$ is called a {\it path} from
the point $x$ to the point $y$. The distance $d(x,y), x,y\in V$, on
the Cayley tree, is the length of the shortest path from $x$ to $y$.
$$
W_{n}=\left\{ x\in V\mid d(x,x^{0})=n\right\}, \ \
V_n=\bigcup\limits_{m=1}^{n}W_m , \ \
L_{n}=\left\{
l=<x,y>\in L\mid x,y\in V_{n}\right\}.
$$
The set of direct successors of $x$ is defined by
$$
S(x)=\left\{ y\in W_{n+1}:d(x,y)=1\right\}, x\in W_{n}.
$$
Observe that any vertex $x\neq x^{0}$ has $k$ direct successors and
$x^{0}$ has $k+1$.

Denote
\begin{eqnarray*}
\Gamma _+^k&=&\left\{x\in \Gamma ^2:d(x^{(0)},x)-even\right\}\\
\Gamma _-^k&=&\left\{x\in \Gamma ^2:d(x^{(0)},x)-odd\right\}
\end{eqnarray*}
In the present paper, we are going to consider spin state spaces as
follows: $\Phi=\{-1,0,+1\}$ and
$\Psi=\{-\frac{1}{2},+\frac{1}{2}\}$. The corresponding
configuration spaces are defined by $\Omega_+=\Phi^{\Gamma _+^k}$
and $\Omega_-=\Psi^{\Gamma _-^k}$. Moreover, we have
$\Omega_{+,n}=\Phi^{\Gamma _+^k\cap V_n}$ and
$\Omega_{-,n}=\Psi^{\Gamma _-^k\cap V_n}$. The configuration space
of the model is given by $\Xi=\Omega_+\times \Omega_-$. Elements
of $\Omega _+$, we denote by $\sigma(x)$, $x \in \Gamma _+^k$.
Similarly, elements of $\Omega_-$, we denote by $s(x)$, $x \in
\Gamma _-^k$.

Throughout in this paper, we always consider the mixed-spin model on the Cayley tree,
where configuration space in taken as $\Xi$. Namely, for the
configuration $\xi\in \Xi$, the associated sites
belong to successive generations of the tree (see Fig.
\ref{CayleyT01}). At odd-numbered levels of the tree, spins with
$\Psi$ elements are placed in the vertexes, while spins with
$\Phi$ elements are placed in the vertexes at even-numbered
levels, i.e.
$$
\xi (x)=\begin{cases}
 \sigma (x); & x\in \Gamma_+^k \\
 s(x); & x\in \Gamma_-^k,
\end{cases}
$$
where $\sigma\in \Phi=\{-1,0,+1\}$ and $s\in
\Psi=\{-\frac{1}{2},+\frac{1}{2}\}$.

\begin{figure} [!htbp]
\centering
\includegraphics[width=65mm]{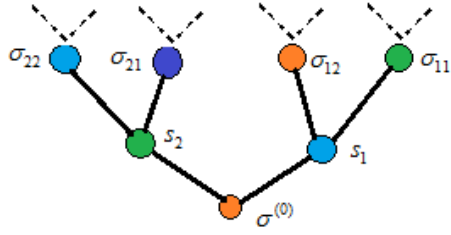}
\caption{Some generations of a second order Cayley tree of with a
$\sigma^{(0)}$ spin on in the root.}\label{CayleyT01}
\end{figure}
Now, let us consider the Ising model with mixed-spins $(1,1/2)$ given by the Hamiltonian
\begin{equation}\label{Hamil1}
H(\xi )=-J\sum _{\langle x,y\rangle } \xi(x)\xi(y),
\end{equation}
where $\xi\in \sigma\times s =\Xi.$

Let $\pmb{h}=(\pmb{h}_{\xi(x)}(x))_{x\in \Gamma^{2}}$, here
$$
\pmb{h}_{\xi (x)}(x)=\left\{
\begin{array}{ll}
 \pmb{h}_{\sigma (x)}, & x\in \Gamma _+; \\
 \pmb{\widetilde{h}}_{s(x)}, & x\in \Gamma _-.
\end{array}
\right.
$$
and  $\pmb{h}=(h_-,h_0,h_+)$,
$\pmb{\widetilde{h}}=(\widetilde{h}_{-\frac{1}{2}},\widetilde{h}_{\frac{1}{2}})$.

Now, for each $n\geq 1$, we define Gibbs measure $\mu _n^{\pmb{h}}$ by
\begin{equation}\label{Gibbs1}
\mu _n^{\pmb{h}}(\xi )=\frac{1}{Z_n}\exp \left\{-\beta H_n(\xi
)+\sum _{x\in W_n} \pmb{h}_{\xi (x)}(x)\right\},
\end{equation}
where $\xi\in \Xi_n:=\Omega_{+,n}\times \Omega_{-,n}.$

Recall that the sequence of measures
$\{\mu _n^{\pmb{h}}\}$ is compatible, if for all $n\geq 1$ and $\xi_{n-1}\in\Xi_{n-1}$ one hasx
\begin{equation}\label{compatile1}
\sum _{w\in \Xi ^{W_{n}}} \mu_n^{\pmb{h}}(\xi \lor
w)=\mu_{n-1}^{\pmb{h}}(\xi), \ \ \textrm{for all} \ \ n\geq 1,
\end{equation}
where
$$
\Xi^{W_n}=\left\{
\begin{array}{ll}
 \Phi ^{W_n}, & n-\text{even}; \\
 \Psi ^{W_n}, & n-\text{odd}.
\end{array}
\right.
$$
Here $\sigma_{n-1}\vee\omega_n$ is the concatenation of the configurations. In this setting, there is a unique measure $\mu$ on $\Omega$ such that for all $n$ and $\xi_n\in \Xi_n$
\begin{equation*}
\mu(\{\xi\ | \ V_n=\xi_n\})=\mu _n^{\pmb{h}}(\xi_n).
\end{equation*}
Such a measure is called a \textit{splitting Gibbs measure (SGM)} corresponding to the model \cite{Roz}.

The following result describes the condition on $\pmb{h}$ ensuring that the sequence $\{\mu
_n^{\pmb{h}}\}$ is compatible.

\begin{thm}\label{thm-comp} The sequence of measures $\{\mu
_n^{\pmb{h}}\}$, $n=1,2,...$ given by \eqref{compatile1} is compatible
if and only if  for any $x\in V$ the following equations hold:
\begin{equation}\label{compatible1a}
\exp(\pmb{h}_{1}(x)-\pmb{h}_0(x))=\prod _{y\in
S(x)}\left(\frac{\exp (-\frac{J\beta
}{2}+\widetilde{h}_{-\frac{1}{2}}(y)+ \exp (\frac{J\beta
}{2}+\widetilde{h}_{\frac{1}{2}}(y))}{\exp
(\widetilde{h}_{-\frac{1}{2}}(y)+ \exp
(\widetilde{h}_{\frac{1}{2}}(y))}\right),
\end{equation}
\begin{equation}\label{compatible1b}
\exp(\pmb{h}_{-1}(x)-\pmb{h}_0(x))=\prod _{y\in
S(x)}\left(\frac{\exp (\frac{J\beta
}{2}+\widetilde{h}_{-\frac{1}{2}}(y)+ \exp (-\frac{J\beta
}{2}+\widetilde{h}_{\frac{1}{2}}(y)))}{\exp
(\widetilde{h}_{-\frac{1}{2}}(y)+ \exp
(\widetilde{h}_{\frac{1}{2}}(y))}\right),
\end{equation}
    \begin{equation}\label{even-case1}\exp
(\pmb{\widetilde{h}}_{\frac{1}{2}}(x)-\pmb{\widetilde{h}}_{-\frac{1}{2}}(x))
=\prod _{y\in S(x)}\left(\frac{\exp(\frac{-J\beta}{2} +
\pmb{h}_{-1}(y))+\exp(\pmb{h}_{0}(y))+\exp(\frac{J\beta}{2} +
\pmb{h}_{1}(y))}{\exp(\frac{J\beta}{2} +
\pmb{h}_{-1}(y))+\exp(\pmb{h}_{0}(y))+\exp(\frac{-J\beta}{2} +
\pmb{h}_{1}(y))}\right).
\end{equation}

\end{thm}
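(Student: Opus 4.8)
The plan is to follow the standard consistency argument for tree-indexed Gibbs measures, working directly from the definition \eqref{Gibbs1} and the compatibility relation \eqref{compatile1}. The key algebraic input is the decomposition of the Hamiltonian \eqref{Hamil1} as one passes from generation $n-1$ to generation $n$: every edge of $L_n$ not already present in $L_{n-1}$ joins a vertex $x\in W_{n-1}$ to one of its successors $y\in S(x)$, so that
\[
-\beta H_n(\xi\vee w) = -\beta H_{n-1}(\xi) + \beta J\sum_{x\in W_{n-1}}\sum_{y\in S(x)}\xi(x)w(y).
\]
Substituting this into \eqref{Gibbs1} and summing over $w\in\Xi^{W_n}$, the boundary contribution factorizes over the vertices $x\in W_{n-1}$ and, for each such $x$, over its successors, which yields
\[
\sum_{w}\mu_n^{\pmb h}(\xi\vee w) = \frac{1}{Z_n}\exp\{-\beta H_{n-1}(\xi)\}\prod_{x\in W_{n-1}}F_x(\xi(x)),
\]
where $F_x(\xi(x)) := \prod_{y\in S(x)}\sum_{w(y)}\exp\{\beta J\,\xi(x)w(y) + \pmb h_{w(y)}(y)\}$. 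At this stage I would split into two cases according to the parity of $n$, since the local spin space alternates between $\Phi$ and $\Psi$ along the tree: when $n$ is odd the inner sum in $F_x$ runs over $w(y)\in\Psi$ while $\xi(x)\in\Phi$, and when $n$ is even the roles are reversed. In either case $F_x$ becomes an explicit elementary function of $\xi(x)$.

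For the sufficiency direction I would observe that each of \eqref{compatible1a}, \eqref{compatible1b} and \eqref{even-case1} is precisely the assertion that the ratio of $F_x$ at a given admissible value to $F_x$ at a fixed reference value (namely $0$ for a spin-$1$ vertex and $-\frac{1}{2}$ for a spin-$\frac{1}{2}$ vertex) equals the corresponding ratio of the weights $\exp\{\pmb h_{\xi(x)}(x)\}$. Equivalently, the three equations say that $F_x(\xi(x)) = c_x\exp\{\pmb h_{\xi(x)}(x)\}$ for a constant $c_x$ independent of $\xi(x)$. Feeding this back into the displayed sum gives $\sum_w\mu_n^{\pmb h}(\xi\vee w) = (Z_n^{-1}\prod_{x\in W_{n-1}}c_x)\exp\{-\beta H_{n-1}(\xi)+\sum_{x\in W_{n-1}}\pmb h_{\xi(x)}(x)\}$, which is proportional to $\mu_{n-1}^{\pmb h}(\xi)$; since both are probability measures the proportionality constant must be $1$, so \eqref{compatile1} holds.

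For the necessity direction I would run the same computation backwards: assuming \eqref{compatile1}, I cancel the common factor $\exp\{-\beta H_{n-1}(\xi)\}$ to reduce the consistency identity to
\[
\frac{1}{Z_n}\prod_{x\in W_{n-1}}F_x(\xi(x)) = \frac{1}{Z_{n-1}}\prod_{x\in W_{n-1}}\exp\{\pmb h_{\xi(x)}(x)\},
\]
valid for every configuration. The decisive point is that both sides factorize over $W_{n-1}$, so one may vary the value $\xi(x)$ at a single vertex while freezing all the others; this isolates the factor at $x$ and forces $F_x(\xi(x)) = c_x\exp\{\pmb h_{\xi(x)}(x)\}$ for each $x$, with $c_x$ absorbing the ratio $Z_{n-1}/Z_n$. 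Taking the quotient of this identity at the two non-reference values against the reference value then reproduces exactly \eqref{compatible1a}--\eqref{compatible1b} (for a spin-$1$ vertex $x$, i.e. $n$ odd) and \eqref{even-case1} (for a spin-$\frac{1}{2}$ vertex $x$, i.e. $n$ even).

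I expect the only genuine obstacle to be bookkeeping rather than ideas: one must keep the parity conventions straight so that the correct spin space is summed inside $F_x$ in each case, and verify that the normalization ratio $Z_n/Z_{n-1}$ is consistently absorbed into the constants $c_x$ so that no spurious configuration-dependence survives. The step that must be argued with care is the passage from the product identity over $W_{n-1}$ to the per-vertex equations, where the freedom to vary one spin at a time is what turns a single global relation into the full system \eqref{compatible1a}--\eqref{even-case1}; once this factorization is established, the remaining manipulations are routine.
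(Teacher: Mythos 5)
Your proposal is correct and follows essentially the same route as the paper: the same edge-decomposition of $H_n$ into $H_{n-1}$ plus the boundary terms, the same factorization of the sum over $w\in\Xi^{W_n}$ into per-vertex, per-successor factors $F_x$, the same parity split, the same ``vary the spin at a single vertex of $W_{n-1}$'' step to isolate the per-vertex identities for necessity, and the same normalization argument (the constant $c_x$ playing the role of the paper's $A(x)$, with $Z_{n-1}\prod_x A(x)=Z_n$) for sufficiency. No substantive difference.
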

\begin{proof} \emph{Necessary}. Let us consider two cases with
respect $n$;\\
(1) $n$-odd case: assume that the eq. \eqref{compatile1} is
satisfied; Then
\begin{equation}\label{compatile2}
\exp \left(\sum _{x\in W_{n-1}} \pmb{h}_{\xi
(x)}(x)\right)=\frac{Z_{n-1}}{Z_n}\sum _{w_n\in \Psi^{W_n}} \exp
\left(\sum _{x\in W_{n-1}} \sum _{y\in S(x)} (J\beta \xi
_{n-1}(x)w_n(y)+\pmb{\widetilde{h}}_{\xi (y)}(y))\right).
\end{equation}
Because $n$ is odd, from eq. \eqref{compatile2}, we get
\begin{equation}\label{compatile3}
\prod_{x\in W_{n-1}} \exp \left(\pmb{h}_{\xi
(x)}(x)\right)=\frac{Z_{n-1}}{Z_n}\sum _{s\in \Psi^{W_n}} \prod
_{x\in W_{n-1}} \prod _{y\in S(x)} \exp \left(J\beta
\sigma(x)s(y)+ \pmb{\widetilde{h}}_{s(y)}(y)\right).
\end{equation}
Fix $x\in W_{n-1}$ and take into account configurations
$\sigma_{n-1}=\overline{\sigma}_{n-1}$ and
$\sigma_{n-1}=\widetilde{\sigma}_{n-1}$ on $\Phi^{W_{n-1}}$ which
coincide on $\Phi^{W_{n-1}\setminus \{x\}}$ and we rewrite the eq.
\eqref{compatile3} for $\overline{\sigma}_{n-1}(x)=1$,
$\overline{\overline{\sigma}}_{n-1}(x)=0$ and
$\overline{\overline{\overline{\sigma}}}_{n-1}(x)=-1.$ Therefore, one finds
\begin{eqnarray*}
(i)&&\exp (\pmb{h}_{-1}(x))=\prod_{y\in S(x)} \sum _{j\in
\left\{-\frac{1}{2},\frac{1}{2}\right\}} \exp \{-J\beta j+ \pmb{\widetilde{h}}_j(y)\},\\
(ii)&& \exp (\pmb{h}_{0}(x))=\prod _{y\in S(x)} \sum _{j\in
\left\{-\frac{1}{2},\frac{1}{2}\right\}} \exp \{\pmb{\widetilde{h}}_j(y)\},\\
(iii)&& \exp (\pmb{h}_{1}(x))=\prod _{y\in S(x)} \sum _{j\in
\left\{-\frac{1}{2},\frac{1}{2}\right\}} \exp \{J\beta j+
\pmb{\widetilde{h}}_j(y)\}.
\end{eqnarray*}
From the last ones, we have
\begin{equation}\label{compatile3a}
\exp (\pmb{h}_{-1}(x))=\prod _{y\in S(x)} \left(\exp (\frac{J\beta
}{2}+\pmb{\widetilde{h}}_{-\frac{1}{2}}(y)+ \exp (-\frac{J\beta
}{2}+\pmb{\widetilde{h}}_{\frac{1}{2}}(y))\right),
\end{equation}
\begin{equation}\label{compatile3b}
\exp (\pmb{h}_{1}(x))=\prod _{y\in S(x)} \left(\exp (-\frac{J\beta
}{2}+\pmb{\widetilde{h}}_{-\frac{1}{2}}(y)+ \exp (\frac{J\beta
}{2}+\pmb{\widetilde{h}}_{\frac{1}{2}}(y))\right),
\end{equation}
\begin{equation}\label{compatile3c}
\exp (\pmb{h}_{0}(x))=\prod _{y\in S(x)}  \left(\exp
(\pmb{\widetilde{h}}_{-\frac{1}{2}}(y)+ \exp
(\pmb{\widetilde{h}}_{\frac{1}{2}}(y))\right).
\end{equation}
Therefore, dividing the equations \eqref{compatile3a} and
\eqref{compatile3b} by  \eqref{compatile3c}, respectively, we obtain
\begin{eqnarray*}
\exp(\pmb{h}_{-1}(x)-\pmb{h}_0(x))&=&\prod _{y\in
S(x)}\left(\frac{\exp (\frac{J\beta
}{2}+\pmb{\widetilde{h}}_{-\frac{1}{2}}(y)+ \exp (-\frac{J\beta
}{2}+\pmb{\widetilde{h}}_{\frac{1}{2}}(y)))}{\exp
(\pmb{\widetilde{h}}_{-\frac{1}{2}}(y)+ \exp
(\pmb{\widetilde{h}}_{\frac{1}{2}}(y))}\right),\\
\exp(\pmb{h}_{1}(x)-\pmb{h}_0(x))&=&\prod _{y\in
S(x)}\left(\frac{\exp (-\frac{J\beta
}{2}+\pmb{\widetilde{h}}_{-\frac{1}{2}}(y)+ \exp (\frac{J\beta
}{2}+\pmb{\widetilde{h}}_{\frac{1}{2}}(y))}{\exp
(\pmb{\widetilde{h}}_{-\frac{1}{2}}(y)+ \exp
(\pmb{\widetilde{h}}_{\frac{1}{2}}(y))}\right).
\end{eqnarray*}
(2) \emph{The case $n$ even}:  This case is proceeded by the same argument as above.  For any configuration $\sigma_{n-1}\in
\Xi_{n-1}$,
\begin{equation}\label{even-case2} \exp \left(\sum _{x\in W_{n-1}}
\pmb{\widetilde{h}}_{\xi (x)}(x)\right)=\frac{Z_{n-1}}{Z_n}\sum
_{w_n\in \Phi^{W_n}} \exp \left(\sum _{x\in W_{n-1}} \sum _{y\in
S(x)} (J\beta \sigma _{n-1}(x)w_n(y)+\pmb{h}_{\xi (y)}(y))\right).
\end{equation}
From eq. \eqref{even-case2} and due to $n$ even, we get
\begin{equation}\label{even-case3}
\prod_{x\in W_{n-1}} \exp \left(\pmb{\widetilde{h}}_{\xi
(x)}(x)\right)=\frac{Z_{n-1}}{Z_n}\sum _{w_n\in \Phi^{W_n}} \prod
_{x\in W_{n-1}} \prod _{y\in S(x)} \exp \left(J\beta \sigma
_{n-1}(x)w_n(y)+ \pmb{h}_{\sigma(y)}(y)\right).
\end{equation}
Fix $x\in W_{n-1}$ and take into account configurations
$s_{n-1}=\overline{s}_{n-1}$ and $s_{n-1}=\widetilde{s}_{n-1}$ on
$\Psi^{W_{n-1}}$ which coincide on $\Psi^{W_{n-1}\setminus \{x\}}$ and we
rewrite the eq. \eqref{even-case3} for
$\overline{s}_{n-1}(x)=\frac{1}{2}$ and
$\overline{\overline{s}}_{n-1}(x)=-\frac{1}{2}$ as follows;
\begin{eqnarray*}
(iv)\exp (\pmb{\widetilde{h}}_{\frac{1}{2}}(x))&=&\prod _{y\in
S(x)}\sum _{j\in \Phi} \exp\left(\frac{1}{2} J\beta  j+
\pmb{h}_j(y)\right),\\ (v) \exp
(\pmb{\widetilde{h}}_{-\frac{1}{2}}(x))&=&\prod _{y\in S(x)}\sum
_{j\in \Phi} \exp\left(-\frac{1}{2} J\beta   j+
\pmb{h}_j(y)\right).
\end{eqnarray*}
From the equations (iv) and (v), we have
\begin{eqnarray}\label{even-case4}
\exp (\pmb{\widetilde{h}}_{\frac{1}{2}}(x))&=&\prod _{y\in
S(x)}\left(\exp(\frac{-J\beta}{2} +
\pmb{h}_{-1}(y))+\exp(\pmb{h}_{0}(y))+\exp(\frac{J\beta}{2} +
\pmb{h}_{1}(y))\right),
\end{eqnarray}
\begin{eqnarray}\label{even-case5}
\exp (\pmb{\widetilde{h}}_{-\frac{1}{2}}(x))&=&\prod _{y\in
S(x)}\left(\exp(\frac{J\beta}{2} +
\pmb{h}_{-1}(y))+\exp(\pmb{h}_{0}(y))+\exp(\frac{-J\beta}{2} +
\pmb{h}_{1}(y))\right).
\end{eqnarray}
Dividing \eqref{even-case4} by \eqref{even-case5}, we find
\begin{equation}\label{even-case6}
\exp
(\pmb{\widetilde{h}}_{\frac{1}{2}}(x)-\pmb{\widetilde{h}}_{-\frac{1}{2}}(x))
=\prod _{y\in S(x)}\left(\frac{\exp(\frac{-J\beta}{2} +
\pmb{h}_{-1}(y))+\exp(\pmb{h}_{0}(y))+\exp(\frac{J\beta}{2} +
\pmb{h}_{1}(y))}{\exp(\frac{J\beta}{2} +
\pmb{h}_{-1}(y))+\exp(\pmb{h}_{0}(y))+\exp(\frac{-J\beta}{2} +
\pmb{h}_{1}(y))}\right).
\end{equation}

\emph{Sufficiency}: Assume that \eqref{Gibbs1} is satisfied. Then
\begin{equation}\label{Sufficiency1}
A(x)h_{\sigma (x)}=\prod _{y\in S(x)} \sum _{u\in \Psi} \exp
(J\beta \sigma (x)u+\pmb{\widetilde{h}}_u),\ \ \sigma (x)\in \Phi
\end{equation}
for some function $A(x)$, $x\in V$. So, \small
\begin{eqnarray}\label{Sufficiency2}
 \text{RHS of}\ \eqref{Gibbs1}=\frac{\exp \left(-\beta H_{n-1}(\sigma _{n-1})\right)}{Z_n}
 \prod\limits _{x\in W_{n-1}} \prod\limits_{y\in S(x)} \sum\limits_{\eta (y)\in \Psi}
 \exp \left(J\beta \sigma _{n-1}(x)\eta (y)+\pmb{\widetilde{h}}_{\eta
 (y)}\right).
\end{eqnarray}
\normalsize
Substituting \eqref{Sufficiency1} into \eqref{Sufficiency2} and
denoting $\mathcal{A}_{n}(x)=\prod\limits _{x\in W_{n-1}}A(x)$, one has
\begin{equation}\label{Sufficiency3}
\text{RHS\ of}\
\eqref{Sufficiency1}=\frac{\mathcal{A}_{n-1}}{Z_n}\exp (-\beta
H_{n-1}(\sigma _{n-1})\prod _{x\in W_{n-1}} \exp (h_{\sigma
_{n-1}(x)}).
\end{equation}
Since $\mu^{(n)}$ is a probability measure, then
$$
\sum _{\sigma _{n-1}\in \Xi ^{V_{n-1}}} \sum _{w_n\in \Psi^{W_n}}
\mu ^{(n)}(\sigma _{n-1}\lor w_n)=1.
$$
Hence from \eqref{Sufficiency3}, we find
$Z_{n-1}\mathcal{A}_{n-1}=Z_n$ and the eq. \eqref{Gibbs1} is
satisfied.
\end{proof}

\begin{rem} It is stressed that the recursive formula $Z_{n-1}\mathcal{A}_{n-1}=Z_n$ allows us to investigate free energies of the model \cite{Akin2022,MAKfree2017}. However, such kind of study will be reported elsewhere.
\end{rem}

\section{Splitting Translation invariant Gibbs measures (TIGMs)}

In this section, we deal with the existence of splitting
translation-invariant Gibbs measures (STIGMs) corresponding to the
Ising model with mixed spin-1 and spin-1/2 by analyzing the the
equations \eqref{compatible1a}-\eqref{even-case1}. Note that the
vector valued functions
$\pmb{\widetilde{h}}=\{\pmb{\widetilde{h}}_{-\frac{1}{2}}(x),\pmb{\widetilde{h}}_{\frac{1}{2}}(x)\}$
and $\pmb{h}=\{\pmb{h}_{-1}(x),\pmb{h}_{0}(x),\pmb{h}_{1}(x)\}$
are considered as \textit{translation-invariant} if
$\pmb{\widetilde{h}}_{i}(x)=\pmb{\widetilde{h}}_{i}(y)$ and
$\pmb{h}_{j}(x)=\pmb{h}_{j}(y)$ for all $y \in S(x)$, where
$i\in\{-\frac{1}{2}, \frac{1}{2}\}$ and $j\in \{-1,0 1\}$ (see
\cite{Roz} for details). Then the corresponding measure is called \textit{splitting translation invariant Gibbs measure}.
In what follows, we restrict ourselves to the tree of order two (i.e. $k=2$).

For the sake of convenience, let us denote: $\pmb{h}_{j}:=\pmb{h}_{j}(x)$ for all $x\in \Gamma_+^2$, $j\in \Phi$, $\pmb{\widetilde{h}}_{i}:=
\pmb{\widetilde{h}}_{i}(x)$, $x\in \Gamma_-^2$, $i\in
\Psi$.

Assuming $U_1=\pmb{h}_{-1}-\pmb{h}_0; U_2=\pmb{h}_1-\pmb{h}_0$
and $V=\pmb{\widetilde{h}}_{1/2}-\pmb{\widetilde{h}}_{-1/2}$, from the equations \eqref{compatible1a}, \eqref{compatible1b}, \eqref{even-case1}
we obtain
\begin{equation}\label{TI-1}
 \exp(U_1)=\left(\frac{\theta^2+ \exp(V)}{\theta (1+  \exp(V))}\right)^{2},
 \end{equation}
 \begin{equation}\label{TI-2}
  \exp(U_2)=\left(\frac{1 + \theta^2  \exp(V)}{\theta (1+ \exp(V)}\right)^{2}.
\end{equation}
 \begin{equation}\label{TI-3}
 \exp (V)=\left(\frac{\exp (U_1)+\theta ^2\exp (U_2)+\theta }
 {\theta ^2\exp (U_1)+\exp (U_2)+\theta }\right)^2,
\end{equation}
where $\theta=e^{{\frac{J\beta}{2}}}$.

Now, let us consider substitutions as $X=\exp(U_1)$, $Y=\exp(U_2)$ and $Z=\exp(V)$, then the last equations
reduce to
\begin{eqnarray}\label{dynamicals1}
 X&=&\left(\frac{\theta ^2+Z}{\theta (1+Z)}\right)^2,\\
 \label{dynamicals2}
 Y&=&\left(\frac{1+\theta ^2Z}{\theta (1+Z)}\right)^2,\\\label{dynamicals3}
 Z&=&\left(\frac{X+\theta ^2Y+\theta }{\theta ^2X+Y+\theta }\right)^2.
\end{eqnarray}
By Theorem \ref{thm-comp}, we infer that fixed points of the obtained system of equations determine TIGM corresponding to the Ising model with mixed spin-1 and spin-1/2.

We notice that the \eqref{dynamicals1}-\eqref{dynamicals3} has a solution  $Z_0=1$, $X_0=Y_0=\left(\frac{\theta ^2+1}{2\theta }\right)^2.$ The corresponding STIGM $\mu_0$ is called \textit{disordered phase} of the model. This means that the corresponding STIGM does not have external boundary conditions. Moreover, it exists for any value of the temperature $T=1/\beta$.

\begin{rem} We notice that for symmetric models (interactions) the disordered Gibbs measure corresponds (in most cases)
to the zero solution of the recurrent equations. However, if the model is not symmetric, then
corresponding recurrent equation has no zero solution. In that case, the disordered
measure is defined as a translation-invariant Gibbs measure which corresponds
to the free boundary condition. Therefore, in some literatures (see for example
\cite{Mos01}) the disordered measure is called free measure.
\end{rem}

\section{Stability analysis of the dynamical system}

In this section, we are going to investigate stability of the fixed point
$Z_0=1$, $X_0=Y_0=\left(\frac{\theta ^2+1}{2\theta }\right)^2.$ It is clear to calculate the Jacobian of the fixed point:
\begin{equation}\label{JakobeM}
J_F:=\left(
\begin{array}{ccc}
 0 & 0 & -\frac{(\theta ^2-\theta ^{-2})}{4} \\
 0 & 0 & \frac{(\theta ^2-\theta ^{-2})}{4} \\
 -\frac{8\theta ^2(\theta ^2-1)}{1+3\theta ^2+4\theta ^3+3\theta ^4+\theta ^6} &
 \frac{8\theta ^2(\theta ^2-1)}{1+3\theta ^2+4\theta ^3+3\theta ^4+\theta ^6} & 0
\end{array}
\right).
\end{equation}
The eigenvalues of the matrix $J_F$ are given as follows:
\begin{center}
$\lambda_1=0,\lambda_2=-\frac{2 \sqrt{1-\theta ^2-\theta ^4+\theta
^6}}{\sqrt{1+3 \theta ^2+4 \theta ^3+3 \theta ^4+\theta
^6}},\lambda_3=\frac{2 \sqrt{1-\theta ^2-\theta ^4+\theta
^6}}{\sqrt{1+3 \theta ^2+4 \theta ^3+3 \theta ^4+\theta ^6}}$.
\end{center}
In order to characterize the behavior of the dynamical system, one needs investigate the eigenvalues of $J_F$ in \eqref{JakobeM}.
\begin{figure} [!htbp]\label{fig2ab1a}
\centering
\includegraphics[width=80mm]{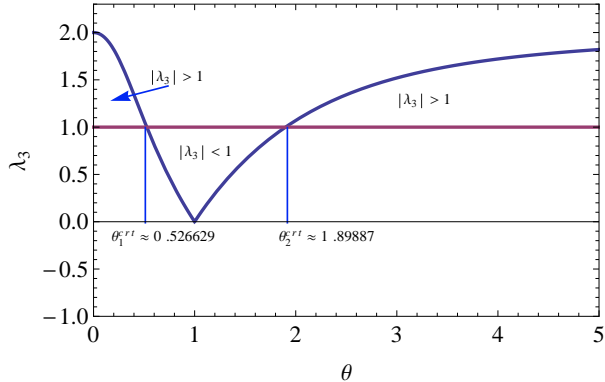}
\caption{The graph of the eigenvalue $\lambda_3$ for
$\theta>0$.}\label{fig2ab1a}
\end{figure}

From $|\lambda_2|=|\lambda_3|=\frac{2 \sqrt{1-\theta ^2-\theta ^4+\theta
^6}}{\sqrt{1+3 \theta ^2+4 \theta ^3+3 \theta ^4+\theta ^6}}=1$,
we have $ \theta _2^{crt}\approx 1.89887$ and  $\theta
_1^{crt}\approx
 0.526629$. For $\theta\in (\theta _1^{crt},\theta _2^{crt})$, one gets
$|\lambda_3|<1$ and for $\theta\in (0,\theta _1^{crt})\cup(\theta
_2^{crt},\infty)$, one has $|\lambda_3|>1$ (see Fig.
\ref{fig2ab1a}).

\begin{rem} If $\theta\in (0,\theta _1^{crt})\cup(\theta _2^{crt},\infty)$, since the fixed point is repelling,
this indicates that the function has other fixed points.
Therefore, it predicts the existence of a phase transition. We will
explain our prediction in the next sections.
\end{rem}

\section{The existence of a phase transition}

In this section, we are going to prove the first part of Theorem \ref{main}. We recall that the phase transition problem is one of the main challenges of statistical
mechanics \cite{G}. The existence of more than one Gibbs measure
corresponding to a lattice model indicated the phase transition
(see \cite{Bax,G, Roz} for details). In this section,
we investigate the existence and non-uniqueness of STIGM. To do so, we investigate the fixed points of the system
\eqref{dynamicals1}-\eqref{dynamicals3}.

If we substitute the equations \eqref{dynamicals1} and
\eqref{dynamicals2} into \eqref{dynamicals1}, then one finds
\begin{equation}\label{0Fixed-1a}
Z=F(Z)
\end{equation}
where
\begin{equation}\label{Fixed-1a}
F(Z)=\frac{\left(Z^2+\theta ^2+2 Z \theta ^2+\theta ^3+2 Z \theta
^3+Z^2 \theta ^3+\theta ^4+2 Z \theta ^4+Z^2 \theta
^6\right)^2}{\left(1+2 Z \theta ^2+Z^2 \theta ^2+\theta ^3+2 Z
\theta ^3+Z^2 \theta ^3+2 Z \theta ^4+Z^2 \theta ^4+\theta
^6\right)^2}.
\end{equation}
Let us analyze the equation $Z=F(Z)$.

After some tricky operations, \eqref{0Fixed-1a} can be reduced to

\begin{equation}\label{fixed1}
(Z-1)(AZ^4 +BZ^3 + CZ^2+ BZ+A)=0,
\end{equation}
where
\begin{eqnarray*}
A&=&\theta ^4 \left(1+\theta +\theta ^2\right)^2,\\
B&=&\left(-1-2 \theta ^3+5 \theta ^4+10 \theta ^5+12 \theta ^6+10
\theta ^7+5 \theta ^8-2 \theta ^9-\theta ^{12}\right),\\
C&=&\left(-1-2 \theta ^2-4 \theta ^3+7 \theta ^4+16 \theta ^5+22
\theta ^6+16 \theta ^7+7 \theta ^8-4 \theta ^9-2 \theta
^{10}-\theta ^{12}\right).
\end{eqnarray*}
By dividing the eq. \eqref{fixed1} to $Z^2$, one gets
$$
A(Z^2+\frac{1}{Z^2})+B(Z+\frac{1}{Z})+C=0.
$$
Denoting
\begin{equation}\label{root1}
t=Z+\frac{1}{Z},\end{equation}
we then obtain
\begin{equation}\label{fixed2}
f(t):=At^2 +Bt + (C-2A)=0.
\end{equation}
The solutions of the last equation are:
\begin{equation}\label{fixed2a}
t_1=\frac{-B-\sqrt{8 A^2+B^2-4 A C}}{2 A},t_2=\frac{-B+\sqrt{8
A^2+B^2-4 A C}}{2 A}
\end{equation}

Taking into account $t=Z+\frac{1}{Z}>2$, there are two
cases which are interesting for us: $t_1<2<t_2$ or $2<t_1<t_2$.

\textbf{Case I}.  Assume that $t_1<2<t_2$. Then, for the parabola
$f(t)$ given in \eqref{fixed2}, it should be satisfied the
inequality $Af(2)<0$. Due to $A>0$, we infer that
$$
2(A+B)+C<0.
$$
After some algebraic operations, the last one reduces to
\begin{equation}\label{fixed3}
3-7 \theta ^2-4 \theta ^3-7 \theta ^4+3 \theta ^6>0.
\end{equation}
Denoting $\rho= \theta+\frac{1}{\theta}$, the inequality \eqref{fixed3} can be rewritten as follows
\begin{equation}\label{fixed4}
\left\{
\begin{array}{c}
 g(\rho):=3\rho^3-16\rho-4>0 \\
 \rho>2, \ (\text{since} \ J\neq 0).
\end{array}
\right.
\end{equation}

One can see that (see
Fig. \ref{fig3a1}) the solution of the last inequality is $\rho\in (\rho_{crt},\infty)$, here $\rho_{crt}\approx 2.4255$.
Due to $\theta+\frac{1}{\theta}>\rho_{crt}$ we obtain
\begin{equation}\label{fixed5}
\theta^2-2.4255\theta+1>0
\end{equation}
which has solution $\theta\in
(0,\theta_{1})\cup(\theta_{2},\infty)$, where
\begin{equation}\label{TT}
\theta_{1,2}=\frac{2.4255\pm \sqrt{(2.4255)^2-4}}{2}.
\end{equation}
Note that $\theta_{1}\approx 0.52663$ and
$\theta_{2}\approx 1.899$. 

\begin{figure} [!htbp]\label{fig3a1}
\centering
\includegraphics[width=70mm]{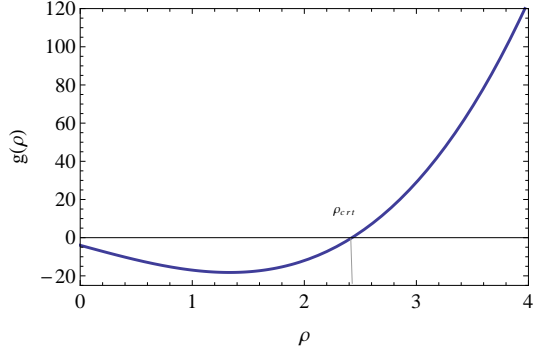}
\caption{The graph of $g(\rho)$ given in eq.
\eqref{fixed4}.}\label{fig3a1}
\end{figure}

In the considered case, we take $t_2$ as the root. Due to
condition \eqref{root1}, one gets $Z^2-t_2Z+1=0$. Therefore, there are two roots
$$
Z_1^*=\frac{t_2+\sqrt{t_2^2-4}}{2},Z_2^*=\frac{t_2-\sqrt{t_2^2-4}}{2}.
$$

The corresponding measures we denote by $\mu_{1,2}$. Consequently, if $\theta\in (0,\theta_{1})\cup(\theta_{2},\infty)$, there occurs a phase transition.

\textbf{Case 2}. Assume that $2<t_1<t_2$. It is enough to consider $\theta_1<\theta<\theta_2$ (see Case 1).  Then, for the parabola $f(t)$ in
\eqref{fixed2}, from $A>0$ it should be satisfied the inequality
$f(2)>0$.

Consequently, $-\frac{B}{2A}>2$, 
that is $B+4A<0$. After some algebraic
operations, one finds
\begin{equation*}\label{fixed3}
-\left(1-3 \theta ^2-2 \theta ^3-3 \theta ^4+\theta ^6\right)
\left(1+3 \theta ^2+4 \theta ^3+3 \theta ^4+\theta ^6\right)<0.
\end{equation*}
So,
\begin{equation}\label{fixed3}
h(\theta):=1-3\theta ^2-2\theta ^3-3\theta ^4+\theta
^6>0.
\end{equation}
\begin{figure} [!htbp]\label{fig3a}
\centering
\includegraphics[width=70mm]{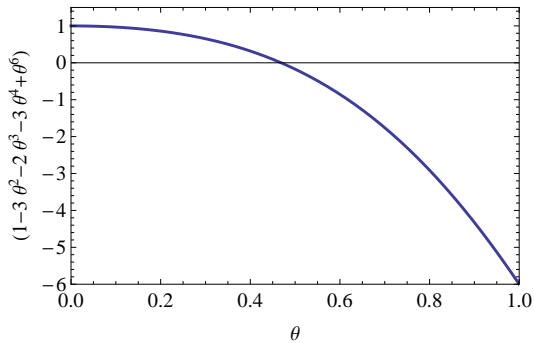}
\caption{The graph of $h(\theta)$ given in eq.
\eqref{fixed3}.}\label{fig3a}
\end{figure}
One can see that $h(\theta)>0$, if $0<\theta<0.452$. Therefore,
due to $\theta>\theta_1$ ($\theta_1=0.5$), we don't have any
solution in the current situation (see Fig. \ref{fig3a}).

\begin{figure} [!htbp]\label{fig2}
\centering
\includegraphics[width=70mm]{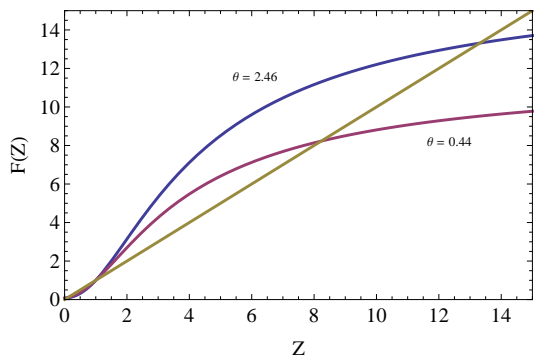}
\caption{The graphs of $F(Z)$ given in \eqref{Fixed-1a} for
$\theta=2.46$ and $\theta=0.44$.}\label{fig2}
\end{figure}

Hence, we can formulate the following result.
\begin{thm} Assume that $J\neq 0$ and $\theta=\exp\{J\beta/2\}$. If $\theta\in (0,\theta_{1})\cup(\theta_{2},\infty)$, where
$\theta_{1,2}$ are given by \eqref{TT}, then for the Ising model with mixed spins $(1,1/2)$ there is a phase transition.
\end{thm}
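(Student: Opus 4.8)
The plan is to reduce the entire problem to counting the positive solutions of a single scalar equation $Z=F(Z)$, since by Theorem~\ref{thm-comp} every splitting translation-invariant Gibbs measure is encoded by a fixed point of the system \eqref{dynamicals1}--\eqref{dynamicals3}, and $X,Y$ are explicit functions of $Z$ through \eqref{dynamicals1} and \eqref{dynamicals2}. First I would substitute these expressions into \eqref{dynamicals3} to obtain $Z=F(Z)$ with $F$ as in \eqref{Fixed-1a}. Because the disordered solution $Z_0=1$ is always present, the key observation is that $Z-1$ divides $Z-F(Z)$ after clearing denominators, so \eqref{0Fixed-1a} factors as in \eqref{fixed1}, leaving the quartic $AZ^4+BZ^3+CZ^2+BZ+A=0$ to analyze. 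A genuine phase transition is then equivalent to the existence of a positive root of this quartic different from $Z_0=1$.

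Next I would exploit the palindromic (reciprocal) structure of the quartic: dividing by $Z^2$ and setting $t=Z+\tfrac{1}{Z}$ turns it into the quadratic $f(t)=At^2+Bt+(C-2A)=0$ of \eqref{fixed2}. The crucial analytic point is that the map $Z\mapsto t=Z+\tfrac{1}{Z}$ sends positive $Z$ onto $[2,\infty)$, is two-to-one for $t>2$, and collapses only at $Z=1\mapsto t=2$. Hence each root $t^\ast>2$ of $f$ produces two new fixed points $Z^\ast_{1,2}=\tfrac{1}{2}\bigl(t^\ast\pm\sqrt{(t^\ast)^2-4}\bigr)$, while a root $t^\ast=2$ merely recovers the disordered phase. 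Thus the whole problem collapses to deciding when $f$ has a root strictly exceeding $2$.

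Because $A=\theta^4(1+\theta+\theta^2)^2>0$, this is a standard sign analysis of the upward parabola $f$ at $t=2$ together with the location of its vertex $-\tfrac{B}{2A}$. There are exactly two regimes to check: either $f$ straddles $2$ (i.e. $f(2)<0$, equivalently $t_1<2<t_2$), or both roots lie to the right of $2$ (i.e. $f(2)>0$ and $-\tfrac{B}{2A}>2$). In the first regime $f(2)<0$ simplifies to $2(A+B)+C<0$, which after the substitution $\rho=\theta+\tfrac{1}{\theta}$ becomes the cubic inequality $g(\rho)=3\rho^3-16\rho-4>0$ on $\rho>2$; solving it yields $\rho_{crt}\approx 2.4255$ and hence exactly the advertised range $\theta\in(0,\theta_1)\cup(\theta_2,\infty)$ through \eqref{fixed5} and \eqref{TT}. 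In the second regime the conditions reduce to $h(\theta)=1-3\theta^2-2\theta^3-3\theta^4+\theta^6>0$, which holds only for $\theta<0.452$ and is therefore incompatible with $\theta>\theta_1\approx 0.527$, so this regime contributes nothing. Collecting the two cases leaves precisely $\theta\in(0,\theta_1)\cup(\theta_2,\infty)$ as the set on which a second (indeed a third) translation-invariant Gibbs measure exists, which is the asserted phase transition.

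I expect the main obstacle to be computational rather than conceptual: verifying that $Z-1$ genuinely factors out and that the quotient is the palindromic quartic with the stated coefficients $A,B,C$ is the delicate ``tricky operations'' step, since it requires expanding the squared polynomials in \eqref{Fixed-1a} and cancelling a large number of terms. The subsequent reduction of $2(A+B)+C$ to the clean cubic $g(\rho)$, and the factorization of $B+4A$ that isolates $h(\theta)$, are the other places where an algebra slip would silently shift the thresholds; carrying the substitution $\rho=\theta+\tfrac{1}{\theta}$ consistently throughout is what keeps these manipulations tractable.
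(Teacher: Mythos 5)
Your proposal is correct and follows essentially the same route as the paper: reduction to $Z=F(Z)$, factoring out $(Z-1)$ to obtain the palindromic quartic, the substitution $t=Z+\tfrac{1}{Z}$ leading to the quadratic \eqref{fixed2}, and the same two-case sign analysis (via $2(A+B)+C<0$ and $\rho=\theta+\tfrac1\theta$ giving $g(\rho)=3\rho^3-16\rho-4>0$, with the second case eliminated through $h(\theta)$). No substantive differences from the paper's argument.
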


\begin{rem} We stress that the considered model has a totally different phase diagram comparing to the Ising model. Indeed, the considered model possess three STIGM in both the ferromagnetic and
anti-ferromagnetic regimes, whereas the classical Ising
model does not have translation-invariant Gibbs measures in the
anti-ferromagnetic regime (i.e. $\theta <1$) \cite{Bax,G} (see Fig. \ref{fig2}).
\end{rem}

\section{Extremality of the Disordered Phase by Tree-indexed Markov chains}

In this section, we are going to prove the second part of Theorem \ref{main}. The extremality of the disordered phase of the Ising model with mixed spin $(1,1/2)$ is investigated by means of tree-indexed Markov chains. Let us recall tree-indexed Markov chains. Suppose that we are given a tree with vertices set $V$,  a probability measure $\nu$ and a transition matrix $\mathbb{P}=(P_{ij})_{i,j\in \Phi}$ on the single-site space which is here a finite set $\Phi=\{1,2,\dots,q\}$. We can obtain a tree-indexed Markov chain $X:V\rightarrow \Phi$ by choosing $X(x^0)$ according to $\nu$ and choosing $X{\nu},$ for each vertex $v \neq x^{0}$, using the transition probabilities given the value of its parent, independently of everything else (see  \cite[Definition 12.2]{G} for details).

Let us define the entries of the transition probabilities matrices as follows:
$\mathbb{P}= (P_{ij})$ as
$$
P_{ij}=\frac{\exp(ij\beta
J+\pmb{\widetilde{h}}_{j})}{\sum\limits_{u=\mp
\frac{1}{2}}\exp(iu\beta J+\pmb{\widetilde{h}}_{u})},
$$
where $i\in \{-1,0,1\}$ and $j\in \{-\frac{1}{2},\frac{1}{2}\}$.

Assume that
$Z=\exp(\pmb{\widetilde{h}}_{\frac{1}{2}}-\pmb{\widetilde{h}}_{-\frac{1}{2}})$, then
\begin{eqnarray}\label{stochasticP1}
\mathbb{P}=\left(
\begin{array}{cc}
 \frac{\theta ^2Z}{1+\theta ^2Z} & \frac{1}{1+\theta ^2Z} \\
 \frac{Z}{1+Z} & \frac{1}{1+Z} \\
 \frac{Z}{\theta ^2+Z} & \frac{\theta ^2}{\theta ^2+Z}
\end{array}
\right).
\end{eqnarray}

Similarly, we define the entries of the transition probabilities
matrix $\mathbb{Q}= (Q_{ij})$ by
$$
Q_{ij}=\frac{\exp(ij\beta J+\pmb{h}_{j})}{\sum\limits_{\ell\in
\{-1,0,1\}}\exp(i\ell\beta J+\pmb{h}_{\ell})},
$$
where $i\in \{-\frac{1}{2},\frac{1}{2}\}$ and $j\in \{-1,0,1\}$.

Assuming $X=\exp(\pmb{h}_{-1}-\pmb{h}_{0})$ and
$Y=\exp(\pmb{h}_{1}-\pmb{h}_{0})$, then one finds
\begin{eqnarray}\label{stochasticQ1}
\mathbb{Q}=\left(
\begin{array}{ccc}
 \frac{\theta ^2X}{\theta ^2X+\theta +Y} & \frac{\theta }{\theta ^2X+\theta +Y} & \frac{Y}{\theta ^2X+\theta +Y} \\
 \frac{X}{X+\theta +\theta ^2Y} & \frac{\theta }{X+\theta +\theta ^2Y} & \frac{\theta ^2Y}{X+\theta +\theta ^2Y}
\end{array}
\right).
\end{eqnarray}

Therefore, from \eqref{stochasticP1} and \eqref{stochasticQ1}, a
STIGM corresponding to a vector $v = (X, Y,Z)\in \mathbb{R}^{3}$
(which is the solution to system of the equations
\eqref{dynamicals1}-\eqref{dynamicals3}) is a tree-indexed Markov
chain with states $\{-1,0, 1\}$ and transition probabilities
matrix is obtained as follows:

\begin{equation*}\label{matHS1}
\mathbb{H}=\mathbb{P}\mathbb{Q}=\frac{1}{(X+\theta +Y\theta
^2)}\left(
\begin{array}{ccc}
 \frac{X\left(1+\theta ^4\sqrt{Z^3}\right)}{\left(1+Z\theta ^2\right)} & \frac{\theta \left(1+\theta ^2\sqrt{Z^3}\right)}{\left(1+Z\theta ^2\right)} & \frac{Y\theta ^2\left(1+\sqrt{Z^3}\right)}{\left(1+Z\theta ^2\right)} \\
 \frac{X\left(1+\theta ^2\sqrt{Z^3}\right)}{(1+Z)} & \frac{\theta \left(1+\sqrt{Z^3}\right)}{(1+Z)} & \frac{Y\left(\theta ^2+\sqrt{Z^3}\right)}{(1+Z)} \\
 \frac{X\theta ^2\left(1+\sqrt{Z^3}\right)}{\left(Z+\theta ^2\right)} & \frac{\theta \left(\theta ^2+\sqrt{Z^3}\right)}{\left(Z+\theta ^2\right)} & \frac{Y\left(\theta ^4+\sqrt{Z^3}\right)}{\left(Z+\theta ^2\right)}
\end{array}
\right).
\end{equation*}
Note that the matrices $\mathbb{P},\mathbb{Q}$ and $\mathbb{H}$
are stochastic ones. Noticing that the disordered phase corresponds to
$Z=1$ and $X=Y=\left(\frac{1+ \theta ^2}{2\theta }\right)^2$, then the corresponding matrix $\mathbb{H}$ reduces to
\begin{equation}\label{matHS2}
\mathbb{H}=\frac{1}{1+3 \theta ^2+4 \theta ^3+3 \theta ^4+\theta
^6}\left(
\begin{array}{ccc}
 \left(1+\theta ^2\right) \left(1+\theta ^4\right) & 4 \theta ^3 & 2 \theta ^2 \left(1+\theta ^2\right) \\
 \frac{\left(1+\theta ^2\right)^3}{2 } & 4 \theta ^3 & \frac{\left(1+\theta ^2\right)^3}{2 } \\
 2 \theta ^2 \left(1+\theta ^2\right) & 4 \theta ^3 & \left(1+\theta ^2\right) \left(1+\theta ^4\right)
\end{array}
\right).
\end{equation}

In order to investigate the criterion of extremality of a STISGM,
let us now give the definitions of quantities $\kappa$ and
$\gamma$ given in the Ref. \cite{MSW07}. Let
$\mu^s_{\mathcal{T}_{x}}$ denote the (finite-volume) Gibbs measure
in which the parent of $x$ has its spin fixed to $s$ and the
configuration on the bottom boundary of $\mathcal{T}_{x}$ (i.e.,
on $\partial\mathcal{T}_{x}\setminus \{\text{parent  of}\ x\}$) is
specified by $\mathcal{T}$ (see
\cite{MSW07,Mos01,Sly}).

For two measures $\mu_1$ and $\mu_1$ on $\Omega$, $||\mu _1-\mu
_2||_x$ denotes the variation distance between the projections of
$\mu_1$ and $\mu_1$ onto the spin at $x$, i.e.
$$
||\mu _1-\mu _2||_x=\frac{1}{2}\sum _{i\in \Phi } \left|\mu
_1(\sigma (x)=i)-\mu _2(\sigma (x)=i)\right|.
$$
Let $\eta^{x,s}$ be the configuration $\eta$ with the spin at $x$
set to $s$.
\begin{defin}\cite[Definition 3.1]{MSW07}\label{def-extremality}
For a collection of Gibbs distributions
$\{\mu^s_{\mathcal{T}_{x}}\}$, define the quantities
$\kappa\equiv\kappa(\{\mu^s_{\mathcal{T}_{x}}\})$ and
$\gamma\equiv\gamma(\{\mu^s_{\mathcal{T}_{x}}\})$ by
\begin{enumerate}
    \item $\kappa =\sup\limits_{z\in\Gamma^k}\max\limits_{z,s,s'}\|\mu _{T_z}^s-\mu
    _{T_z}^{s'}\|_z$
    \item $\gamma =\sup\limits_{A\subset\Gamma^k}\max _{z,s,s'}\|\mu _A^{\eta ^{y,s}}-\mu _A^{\eta
    ^{y,s'}}\|_z$, where the maximum is taken over all boundary conditions $\eta$, all sites $y\in \partial A$, all neighbors
$x\in A$ of $y$, and all spins $s,s'\in \{-1,0,1\}$.
\end{enumerate}
\end{defin}

It is known \cite[Theorem 9.3]{MSW07} that a sufficient condition for extremality of the translation-invariant Gibbs measure is the following
inequality:
\begin{equation}\label{exramality1}
k \kappa \gamma < 1.
\end{equation}
Note that $\kappa$ has the particularly simple form $\kappa
=\frac{1}{2}\max\limits _{i,j}\left\{\sum\limits_{l=1}^3
\left|H_{i,l}-H_{j,l}\right|\right\}$.

From the Eq. \eqref{matHS2}, we obtain
\small\begin{eqnarray*}
\kappa&=&\frac{1}{2(1+3\theta ^2+4\theta ^3+3\theta ^4+\theta ^6)}\max
\left[\left\{(-1+\theta )^2(1+\theta )^2,2(-1+\theta )^2(1+\theta
)^2,(-1+\theta )^2(1+\theta
)^2\right\}\right]\\
&=&\frac{\left(1+\theta ^2\right)\left(-1+\theta
^2\right)^2}{(1+3\theta ^2+4\theta ^3+3\theta ^4+\theta ^6)},
\end{eqnarray*}
here we have used that $k=2$.

\normalsize

From the definition \ref{def-extremality}, it is easy that
$\gamma=\kappa$. So, the extramality condition \eqref{exramality1} reduces to
\begin{equation*}\label{Kappa}
2\kappa \gamma -1=2\left(\frac{\left(1+\theta
^2\right)\left(-1+\theta ^2\right)^2}{(1+3\theta ^2+4\theta
^3+3\theta ^4+\theta ^6)}\right)^2-1<0.
\end{equation*}
One can see that $1+3 \theta ^2+4 \theta ^3+3 \theta ^4+\theta
^6=(1+\theta^2)^3+4\theta^3$. Therefore,
\begin{eqnarray}\label{Kappa1}
2\kappa \gamma -1&=2&\left(\frac{\left(-1+\theta ^2\right)^2
\left(1+\theta ^2\right)}{(1+\theta ^2)^3+4\theta
^3}\right)^2-1\\\nonumber &=&\left(\frac{\sqrt{2}\left(-1+\theta
^2\right)^2 \left(1+\theta ^2\right)}{(1+\theta ^2)^3+4\theta
^3}-1\right)\left(\frac{\sqrt{2}\left(-1+\theta ^2\right)^2
\left(1+\theta ^2\right)}{(1+\theta ^2)^3+4\theta^3}+1\right)<0.
\end{eqnarray}
It is obvious that  \eqref{Kappa1} is equivalent to
\begin{equation}\label{Kappa2}
\frac{\sqrt{2}\left(-1+\theta ^2\right)^2 \left(1+\theta
^2\right)}{(1+\theta ^2)^3+4\theta ^3}-1<0.
\end{equation}
From \eqref{Kappa2}, one gets
$$
\sqrt{2} \left(-1+\theta^2\right)^2
\left(1+\theta^2\right)-\left(4
\theta^3+\left(1+\theta^2\right)^3\right)<0.
$$
After some long and tedious algebraic operations, the last one reduces to
$$
(-1+\sqrt{2})+\left(-3-\sqrt{2}\right) \theta^2-4
\theta^3+\left(-3-\sqrt{2}\right)
\theta^4+\left(-1+\sqrt{2}\right)\theta^6<0.
$$
and after some algebraic operations, finally we arrive at
\begin{equation}\label{V(teta2)}
(\sqrt{2}-1)\vartheta^3-(4\sqrt{2}+\sqrt{3}-3)\vartheta-4<0,
\end{equation}
where $\vartheta=\theta + \frac{1}{\theta}.$

Using Cardano's formula and Mathematica \cite{Wolfram}, the solution of \eqref{V(teta2)} is $\vartheta<\vartheta_{crt}$, where
$\vartheta_{crt}\approx 3.634$. Therefore, from $\theta +
\frac{1}{\theta}<3.634$, one finds
\begin{equation}\label{f(teta)}
f(\theta)=\theta^2-3.634\theta+1<0.
\end{equation}
So, the solution of the last inequality is $\theta\in(\tilde\theta_1,\tilde\theta_2)$, where
\begin{equation}\label{tta}
\tilde\theta_{1,2}=\frac{1}{2}\big(\vartheta_{crt}\pm\sqrt{\vartheta_{crt}^2-4}\big).
\end{equation}
Note that
$\tilde\theta_1\approx 0.299934,\tilde\theta_2\approx 3.33407$.

Consequently, we get the following result.

\begin{thm}\label{extr} If $\theta\in(\tilde\theta_1,\tilde\theta_2)$, then the disordered phase is extreme.
\end{thm}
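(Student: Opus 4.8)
The plan is to derive Theorem~\ref{extr} from the sufficient condition for extremality recalled in \eqref{exramality1}, namely $k\kappa\gamma<1$ (from \cite{MSW07}), applied to the tree-indexed Markov chain representation of the disordered phase $\mu_0$. Since this is only a sufficient condition, it delivers precisely the one-directional ``if'' implication claimed in the theorem. The first step is to record that $\mu_0$ is the splitting translation-invariant Gibbs measure attached to the fixed point $Z=1$, $X=Y=\left(\frac{1+\theta^2}{2\theta}\right)^2$, so that its one-step transition kernel is exactly the stochastic matrix $\mathbb{H}=\mathbb{P}\mathbb{Q}$ evaluated at these values, i.e. the explicit $3\times3$ matrix \eqref{matHS2} with rows indexed by the spins $\{-1,0,1\}$. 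All the geometry of the estimate is then contained in the three rows of this matrix.

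The second step is to compute the quantities $\kappa$ and $\gamma$ of Definition~\ref{def-extremality}. For $\kappa$ I would use the simplification $\kappa=\frac12\max_{i,j}\sum_{l=1}^3|H_{i,l}-H_{j,l}|$, that is, half of the largest $\ell^1$-distance between two rows of \eqref{matHS2}. Comparing the three pairs of rows shows that the extreme pair $(i,j)=(-1,+1)$ realizes the maximum (the middle row contributes only half as much), which yields the closed form $\kappa=\frac{(1+\theta^2)(\theta^2-1)^2}{1+3\theta^2+4\theta^3+3\theta^4+\theta^6}$. For $\gamma$, which a priori is a supremum over all finite subtrees, all boundary conditions, and all boundary-to-interior neighbours, I would argue that on this second-order tree the worst admissible boundary perturbation is transmitted through a single application of $\mathbb{H}$ and hence reduces to the one-site quantity, giving $\gamma=\kappa$.

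With $k=2$ and $\gamma=\kappa$, the criterion becomes $2\kappa^2<1$. The third step is purely algebraic: after writing $1+3\theta^2+4\theta^3+3\theta^4+\theta^6=(1+\theta^2)^3+4\theta^3$, the expression \eqref{Kappa1} factors and is equivalent to the single inequality \eqref{Kappa2}. Clearing the positive denominator and dividing through by $\theta^3$, the symmetric substitution $\vartheta=\theta+\frac1\theta$ (using $\theta^3+\theta^{-3}=\vartheta^3-3\vartheta$) collapses everything to the cubic inequality \eqref{V(teta2)} in the single variable $\vartheta$. Solving this cubic by Cardano's formula gives the relevant real threshold $\vartheta_{crt}\approx3.634$, so the condition is $\vartheta<\vartheta_{crt}$; translating back through $f(\theta)=\theta^2-\vartheta_{crt}\,\theta+1<0$ as in \eqref{f(teta)} yields exactly $\theta\in(\tilde\theta_1,\tilde\theta_2)$ with $\tilde\theta_{1,2}$ given by \eqref{tta}, proving the theorem.

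I expect the main obstacle to be the rigorous identification $\gamma=\kappa$. In general $\gamma$ is defined by a supremum over arbitrary boundary conditions and subtrees and need not coincide with the single-site quantity $\kappa$; showing equality here, rather than merely bounding $\gamma$ from above, is what makes the clean criterion $2\kappa^2<1$ usable. A secondary but still delicate point is the algebraic reduction to the cubic \eqref{V(teta2)} together with verifying that its real root in the admissible region $\vartheta>2$ (forced by $J\neq0$, hence $\theta\neq1$) is unique, so that the solution set is genuinely the single interval $(\tilde\theta_1,\tilde\theta_2)$ rather than a union of intervals.
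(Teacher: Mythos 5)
Your proposal follows essentially the same route as the paper's proof: the sufficient condition $k\kappa\gamma<1$ of \cite{MSW07} applied to the stochastic matrix \eqref{matHS2} at the disordered fixed point, the same closed form for $\kappa$ (realized by the two extreme rows, with the middle row contributing half as much), the identification $\gamma=\kappa$, and the same reduction via $\vartheta=\theta+\frac1\theta$ to the cubic \eqref{V(teta2)} with threshold $\vartheta_{crt}\approx 3.634$, giving $\theta\in(\tilde\theta_1,\tilde\theta_2)$. The step you flag as the main obstacle, namely justifying $\gamma=\kappa$ beyond the single-site bound, is treated no more carefully in the paper, which simply asserts it ``from the definition,'' so your attempt matches the paper's argument in both structure and level of rigor.
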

%

Now, it is natural to ask: if $\theta\notin(\tilde\theta_1,\tilde\theta_2)$, is the disordered phase extreme?
To respond the raised question, we are going to determine the regions of the
parameters in which the disordered phase is not extreme in the set
of all Gibbs measures (including the non-translation invariant
ones). It is known that a sufficient condition (Kesten-Stigum
condition \cite{Kesten-Stigum-1966}) for non-extremality of a
Gibbs measure $\mu$ corresponding to the matrix $\mathbb{H}$ on a
Cayley tree of order $k\geq 1$ is that $k|\lambda^2_{\max}| > 1$,
where $|\lambda_{\max}|$ is the second largest (in absolute value)
eigenvalue of $\mathbb{H}$ given in \eqref{matHS2}. Furthermore,
we are going to employ this condition to determine the
non-extremity of the disordered phase.

One can show that the set of the eigenvalues of the stochastic
$\mathbb{H}$ given in \eqref{matHS2} are the following ones:

\begin{equation*}\label{eigen-H} \left\{0,\frac{\left(-1+\theta
^2\right)^2 \left(1+\theta ^2\right)}{1+3 \theta ^2+4 \theta ^3+3
\theta ^4+\theta ^6},1\right\}.
\end{equation*}
Due to $1+3 \theta ^2+4 \theta ^3+3 \theta ^4+\theta
^6=(1+\theta^2)^3+4\theta^3$, the above condition can be rewritten as
\begin{eqnarray}\label{V(teta)}
V(\theta)&=&2\lambda^{2}_{\max}-1 \nonumber\\
&=&2\left(\frac{\left(-1+\theta ^2\right)^2 \left(1+\theta
^2\right)}{(1+\theta ^2)^3+4\theta ^3}\right)^2-1 \nonumber\\
&=&\left(\frac{\sqrt{2}\left(-1+\theta ^2\right)^2 \left(1+\theta
^2\right)}{(1+\theta ^2)^3+4\theta
^3}-1\right)\left(\frac{\sqrt{2}\left(-1+\theta ^2\right)^2
\left(1+\theta ^2\right)}{(1+\theta ^2)^3+4\theta^3}+1\right)>0.
\end{eqnarray}
Obviously \eqref{V(teta)} is equivalent to
\begin{equation}\label{1V(teta1)}
\frac{\sqrt{2}\left(-1+\theta ^2\right)^2 \left(1+\theta
^2\right)}{(1+\theta ^2)^3+4\theta ^3}-1>0.
\end{equation}
From \eqref{1V(teta1)}, one finds
$$
\sqrt{2} \left(-1+\theta^2\right)^2
\left(1+\theta^2\right)-\left(4
\theta^3+\left(1+\theta^2\right)^3\right)>0.
$$
After some long and tedious algebraic operations, we get
$$
(-1+\sqrt{2})+\left(-3-\sqrt{2}\right) \theta^2-4
\theta^3+\left(-3-\sqrt{2}\right)
\theta^4+\left(-1+\sqrt{2}\right)\theta^6>0.
$$
Hence, by denotiing $\vartheta=\theta + \frac{1}{\theta}$ the last one is reduced to
\begin{equation}\label{1V(teta2)}
(\sqrt{2}-1)\vartheta^3-(4\sqrt{2}+\sqrt{3}-3)\vartheta-4>0.
\end{equation}

One can observe that \eqref{1V(teta2)} and \eqref{V(teta2)} are complementary inequalities, therefore, the solution of \eqref{1V(teta2)} (in terms of $\theta$) is $(0,\tilde\theta_1)\cup (\theta_2,\infty)$.
Now, keeping in mind Theorem \ref{extr} we obtain the following important result.

 \begin{thm}\label{exrt1} Let $J\neq 0$ and consider the Ising model with mixed spin $(1,1/2)$ on a Cayley tree of order two. Then There exist $\tilde\theta_1$ and $\tilde\theta_2$ (see \eqref{tta}) such that the disordered phase $\mu_0$ is extreme if and only if
    $$
    \theta\in (\tilde\theta_1, \tilde\theta_2);
    $$
 where $\theta=\exp\{J\beta/2\}$.
 \end{thm}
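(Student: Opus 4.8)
The plan is to prove the two implications separately and then to exploit the fact that the two sufficient conditions we invoke turn out to be exactly complementary. The ``if'' direction is already contained in Theorem \ref{extr}: the Martinelli--Sinclair--Weitz criterion \eqref{exramality1}, namely $k\kappa\gamma<1$, together with the identity $\gamma=\kappa$ and $k=2$, becomes $2\kappa^2<1$, and the computation preceding \eqref{tta} shows that this holds precisely when $\theta\in(\tilde\theta_1,\tilde\theta_2)$. Thus I would only need to establish the ``only if'' direction, i.e. that $\mu_0$ fails to be extreme whenever $\theta\notin(\tilde\theta_1,\tilde\theta_2)$.

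For the ``only if'' direction I would argue by contraposition using the Kesten--Stigum condition: a Gibbs measure associated with the stochastic matrix $\mathbb{H}$ is non-extreme as soon as $k\lambda_{\max}^2>1$, where $\lambda_{\max}$ is the second-largest (in absolute value) eigenvalue of $\mathbb{H}$. First I would evaluate $\mathbb{H}$ at the disordered fixed point $Z=1$, $X=Y=\big((1+\theta^2)/(2\theta)\big)^2$, obtaining the explicit stochastic matrix \eqref{matHS2}. Next I would diagonalize it: its spectrum is $\{0,\lambda,1\}$ with
\[
\lambda=\frac{(\theta^2-1)^2(1+\theta^2)}{1+3\theta^2+4\theta^3+3\theta^4+\theta^6}\ge 0,
\]
so that $\lambda_{\max}=\lambda$. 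Finally I would insert this into $2\lambda^2>1$ and, after substituting $\vartheta=\theta+1/\theta$, reduce it to the cubic inequality \eqref{1V(teta2)}, whose solution set in terms of $\theta$ is $(0,\tilde\theta_1)\cup(\tilde\theta_2,\infty)$; on this region the Kesten--Stigum bound forces non-extremality.

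The key structural observation---and the reason the result is a genuine equivalence rather than a one-sided bound---is that $\kappa$, $\gamma$ and $\lambda_{\max}$ are all equal to the same rational function of $\theta$. Consequently the quantity $2\kappa^2$ governing extremality and the quantity $2\lambda_{\max}^2$ governing non-extremality coincide, so the two sufficient conditions are strict complements of each other. Comparing \eqref{V(teta2)} with \eqref{1V(teta2)} makes this transparent: they are the same cubic in $\vartheta$ with opposite inequality signs, sharing the common threshold $\vartheta_{crt}\approx 3.634$. Combining the two implications yields that $\mu_0$ is extreme when $\theta\in(\tilde\theta_1,\tilde\theta_2)$ and non-extreme when $\theta\in(0,\tilde\theta_1)\cup(\tilde\theta_2,\infty)$, which is exactly the claimed equivalence.

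The main obstacle I anticipate is twofold. On the computational side, carrying out the spectral computation of $\mathbb{H}$ and the reduction of both conditions to a single cubic in $\vartheta$ (solved via Cardano's formula) requires care but is essentially mechanical. The genuinely delicate point is the boundary $\theta\in\{\tilde\theta_1,\tilde\theta_2\}$, where both criteria degenerate to equality $2\kappa^2=2\lambda_{\max}^2=1$: there neither the Martinelli--Sinclair--Weitz bound nor the Kesten--Stigum bound decides extremality. I would therefore record the equivalence on the open interval as stated, remarking that these two isolated critical values fall outside the reach of both sufficient conditions and would need a separate, sharper reconstruction argument to be settled.
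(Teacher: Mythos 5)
Your proposal follows essentially the same route as the paper: the ``if'' direction via Theorem \ref{extr} (the Martinelli--Sinclair--Weitz bound $k\kappa\gamma<1$ with $\gamma=\kappa$), the ``only if'' direction via the Kesten--Stigum condition $k\lambda_{\max}^2>1$ applied to the matrix \eqref{matHS2}, and the key observation that $\kappa$ and $\lambda_{\max}$ coincide so that \eqref{V(teta2)} and \eqref{1V(teta2)} are exactly complementary. Your caveat about the two boundary values $\theta\in\{\tilde\theta_1,\tilde\theta_2\}$, where both sufficient conditions degenerate to equality and neither decides extremality, is a legitimate gap that the paper itself passes over in silence.
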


 \begin{rem} We notice that, for the Ising model on the Cayley tree, the
disordered phase is extreme, if and only if $\exp\{J\beta\}<1/\sqrt{k}$ \cite{Bleher1990,Ioffe-1996}.
On the other hand, the Ising model with mixed spin $(1,1/2)$ also has like Potts kind of behavior. However, for the Potts model
on the Cayley tree the extremality of the disordered phase has been investigated in \cite{KR17,Mos01,Haydarov2016}. It turns out that there are two critical values  $\theta_1$ and $\theta_2$ such that if $\theta<\theta_1$ then the disordered phase is extreme, if $\theta>\theta_2$ then it is not extreme, but in the region $(\theta_1,\theta_2)$ its the extremality is still an open problem. We stress that in the considered setting,
the extremity condition is similar to the Ising model.
 \end{rem}

%

\section*{Acknowledgements.} The first author (H. A) thanks ICTP for
providing financial support and all facilities. He is also grateful
to the Simons Foundation and IIE for their support.


\end{document}